\tikzstyle{every picture}+=[remember picture]
\definecolor{color1}{HTML}{AECEA1}
\definecolor{color2}{HTML}{488389}
\definecolor{color3}{HTML}{383C65}
\definecolor{color4}{HTML}{2B1F3E}
\theoremstyle{theorem}
\newtheorem{theorem}{Theorem}
\theoremstyle{definition}
\icmltitlerunning{Self-Supervised Pretraining of Graph Neural Networks for the Retrieval of Related Mathematical Expressions}
\begin{document}

\twocolumn[
\icmltitle{Self-Supervised Pretraining of Graph Neural Networks for the Retrieval of Related Mathematical Expressions in Scientific Articles}



\icmlsetsymbol{equal}{*}

\begin{icmlauthorlist}
\icmlauthor{Lukas Pfahler}{tudo}
\icmlauthor{Katharina Morik}{tudo}
\end{icmlauthorlist}

\icmlaffiliation{tudo}{Artificial Intelligence Group, TU Dortmund University, Dortmund, Germany}
\icmlcorrespondingauthor{Lukas Pfahler}{lukas.pfahler@tu-dortmund.de}

\icmlkeywords{Machine Learning, ICML}

\vskip 0.3in
]


\makeatletter
\renewcommand{\ICML@appearing}{This work is licensed under a 
\href{https://creativecommons.org/licenses/by-sa/4.0/}{Creative Commons Attribution-}
\href{https://creativecommons.org/licenses/by-sa/4.0/}{ShareAlike 4.0 International License}
.}
\makeatother
\printAffiliationsAndNotice{}  

\begin{abstract}
Given the increase of publications, search for relevant papers becomes tedious. In particular, search across disciplines or schools of thinking is not supported. This is mainly due to the retrieval with  keyword queries: technical terms differ in different sciences or at different times. Relevant articles might better be identified by their mathematical problem descriptions. Just looking at the equations in a paper already gives a hint to whether the paper is relevant. 
Hence, we propose a new approach for retrieval of mathematical expressions based on machine learning.
We design an unsupervised representation learning task that combines embedding learning with self-supervised learning.
Using graph convolutional neural networks we embed mathematical expression into low-dimensional vector spaces that allow efficient nearest neighbor queries.
To train our models, we collect a huge dataset with over 29 million mathematical expressions from over 900,000 publications published on arXiv.org. The math is converted into an XML format, which we view as graph data.
Our empirical evaluations involving a new dataset of manually annotated search queries show the benefits of using embedding models for mathematical retrieval. 
  
This work was originally published at KDD 2020 \cite{Pfahler/Morik/2020a}.
\end{abstract}

\section{Introduction}
Machine learning has contributed to many success stories of search engines.
Unfortunately, the search is most often based on words or text. Technical terms in different disciplines, however, may have different meanings or the same meaning may be referred to by different terms. 
For instance, various usages of the Bayes' law occur in different scientific fields and can be found under different titles. For instance in astrophysics, it is known as \emph{information field theory} \cite{Ensslin/etal/2008c}. Without knowing physics and even if the name \emph{Bayes} were not mentioned, it is easily recognized by the formula
$P(d|s)=P(d,s) / P(s)$ in the paper. 
Another example is Ising's paper in a physics journal from 1925 under the title \emph{Ferromagnetismus}. Today, the Ising model is also popular in machine learning, but is referred to first as \emph{Hopfield network} and later as \emph{Boltzmann machine}. 
This illustrates the aspect of time: words for particular topics change over time. 
The language of Ising's paper is German, the paper introducing Jensen's inequality in 1906 is written in French. Again, the inequality 
$f((a+b)/2) \leq f(a)/2 + f(b)/2$
 can be easily understood, anyhow.
We conclude that the most compact and comprehensive way to transport the main ideas of scientific manuscripts in disciplines like computer science or physics are the equations used. Thus it should also be the way we formulate our search queries when searching for scientific manuscripts.
In order to judge the relevance of mathematical expressions for a search query, a system has to generalize between different notations and match the parts of equations, that describe the same concepts, even if they appear in a different form. A human reader resorts to domain knowledge acquired over years of training in his field to judge the relevance. We wonder how machine learning models with access to vast amounts of mathematical content can help automatize this process.

In this work, we propose to use graph convolutional neural networks to learn a representation of mathematical expressions that captures semantic relatedness. To this end, we design two unsupervised learning tasks, one classic embedding learning task based on contextual similarity and one self-supervised learning task inspired by masked-language models.
We curate a dataset of over 28.9 million equations from over 900,000 papers from arXiv.org and represent the equations as graphs with one-hot encoded features.
Then we train our models on this large collection of equations.
We compile an evaluation dataset with annotated search queries from several different disciplines and showcase the usefulness of our approach for deploying a search engine for mathematical expressions.

The rest of this paper is structured as follows: We begin by reviewing related work on math search and on our machine learning approaches. In Section~3 we describe the dataset of papers and equations sourced from arXiv.org for our study and present our pre-processing choices. Then we present the graph convolutional neural network we use for embedding equations and describe our two unsupervised learning tasks in Section~4. We begin a statistical analysis of our problem in Section~5 before presenting an extensive empirical validation in Section~6.

\section{Math Search and KDD}
Mining and indexing mathematical expressions in document collections is a challenging task, mostly tackled in the information retrieval community \cite{Guidi/Sacerdoti/2015a,Zhong/Zanibbi/2019a}. 
We outline how the problem of math search is treated with the tools from knowledge discovery and data mining and present related work on the machine learning methods we chose for our approach.
\paragraph{Representation}
The first question we have to consider is how to represent mathematical expression. Choices can be divided into two categories: those for visually representing and those for semantically representing math. The former category is focused on the layout of an expression. The most prominent choices are LaTex, a Turing-complete language used in the publications on arXiv.org, as well as Presentation MathML\footnote{\url{https://www.w3.org/TR/MathML3/}}, an XML dialect for displaying math on the web that we chose in this work.
The latter category includes Content MathML and OpenMath, two similar XML dialects that focus on semantic rather than layout, but also domain-specific languages for symbolic math solvers like Mathematica, that also allow to manipulate and transform formulas.
To the best of our knowledge, no large, public collection of semantic math expressions exists and, unfortunately, converting math from a display-representation, where data is available in large quantities, to a semantic representation which seems more appropriate for searching, is a non-trivial task. Available solutions either use rules and heuristics, e.g. the converter ml2om that translates LaTeX to OpenMath \cite{Timoney/99a}, or also apply machine learning \cite{Wang/etal/2018a}. 
We chose to apply machine learning methods directly on the Presentation MathML representation.
The bottom line of the representation question is that math is expressed in trees, either XML or other parse trees. Our previous work \cite{Pfahler/etal/2019a} may be the notable exception to this: We chose to represent equations as fixed-size bitmaps. While one could argue that this is an unsuitable choice, the multitude of machine-learning or computer-vision approaches that successfully transform images of typeset \cite{Deng/etal/2017a} or hand-written \cite{Munoz/etal/2014a,Mahdavi/etal/2019a} math back to tree-based representations suggests that bitmap representations preserve all required information of tree-based approaches.
\paragraph{Similarity Measure}
The second question is how we compute similarity between formulas. Zanibbi et al. distinguish  text-based, tree-based and spectral approaches \cite{multistage}. Text-based approaches transform tree-structured math into a sequence, for instance by pre-order traversal, and then estimate the similarity using methods for sequences like cosine similarities of bags-of-words or the length of the largest common substring. Tree-based approaches focus on matching trees or subtrees. Typically computing similarities involves solving dynamic-programming problems. Spectral approaches work on paths or partial subtrees in the trees. An example is approach0 \cite{Zhong/Zanibbi/2019a}, that indexes root-leaf paths of operator trees. From matches of the root-leaf paths, they compute the largest common subexpression to score the similarity of two equations. To convert math from LaTeX to the semantic representation of operator trees, the authors use ca. 100 handwritten grammar rules.

A new trend is to use machine learning to learn a similarity measure. A machine learning model maps an equation to a dense, low-dimensional vector. The similarity between these so-called embeddings can be computed via their inner product, which enables fast indexing using a variety of index structures,  including faiss and annoy, designed for efficiently handling millions of these dense, low-dimensional vectors.
Mansouri et al. \yrcite{Mansouri/etal/2019a} propose to embed equations using fastText, a method originally designed for computing word embeddings, while in our previous work \cite{Pfahler/etal/2019a} we compute embeddings with a similar embedding learning task and convolutional neural networks (see Section~4.2).

\paragraph{Graph Convolutional Neural Networks}
In this work we propose an embedding model based on graph convolutional neural networks. Like classic convolutional neural networks for image processing, they compute feature maps based on local neighborhoods. 
While in CNNs, we have features associated with each pixel in the pixel grid and neighborhoods are defined by this grid, in graph CNNs we have features associated with each node of the graph and neighborhoods are defined by the edges in the graph.
We define graph structures  $x=(X,E)$ as a tuple of node-features $X$ and edges $E$. Let $|x|$ denote the number of nodes in $x$. We assume that $X \in \mathbb R^{|x| \times d}$ where $X_i$ are the features of the $i$-th node. A graph CNN maps an input graph to an output with transformed feature vectors in a $d'$-dimensional output space but with identical edge structure.
It is defined by composing different layers. Borrowing the notation of Morris et al. \yrcite{morris2019weisfeiler}, an abstract graph network layer is defined by its output $x'_i$ for the $i$-th node
$$x_i' = \psi\left(x_i,\square_{j \in \mathcal{N}_E(i)} \, \phi\left(x_i, x_j, e_{ij}\right) \right)$$
where $\phi, \psi$ are (sub-)differentiable operators such as linear transformations or multi-layer perceptrons, $\square$ denotes a sub-differentiable, permutation invariant function like sum, mean or max and $\mathcal N_E(i)$ denotes the set of all neighboring nodes of $i$ in the graph with edges $E$. We optionally use information about the edges in the form of vectorial edge-features $e_{ij}$.
As long as all layers in a graph neural network are (sub-)differen\-tiable operations, we can train the network via backpropagation. Efficient software libraries for training models with GPU-support are available, e.g. we use torch-geometric \cite{Fey/Lenssen/2019}.
Graph CNNs have been applied in many contexts, for instance for classifying molecules \cite{duvenaud2015convolutional} or classification and segmentation of point-clouds \cite{wang2019dynamic}. In this work we apply them to learn similarities between mathematical expressions, where we view an XML-representation as graph-structured data.

\paragraph{Self-Supervised Learning}
We further draw influence from a recently proposed class of representation learning tasks called self-supervised learning. Self-supervised learning tasks are unsupervised learning tasks, where parts of the inputs are used to construct proxy tasks. The representations learned in these proxy-tasks can then be used in downstream tasks. For instance, we can rotate images and train a model to predict the rotation angle, as proposed by Gidaris et al. \yrcite{Gidaris/etal/2018}. Using massive amounts of unlabeled data readily available, we can fit models that solve a task like this.

We are particularly interested in masking tasks, where parts of the input are hidden from a model and the model's task is to predict the hidden parts. This was made popular by the BERT model for pretraining natural language representations \cite{Devlin/etal/2018a} and has since then been adopted to other inputs, for instance as pretraining for image classification with convolutional neural networks \cite{Trinh/etal/2019}. We construct a masking task for mathematical expressions and use graph convolutional neural networks to predict the masked parts.

\section{The Data}
We outline how we gather data from arxiv.org and transform them to graph structured data for our graph convolutional neural network.
\subsection{Dataset}
We are working on data obtained from arxiv.org, a service where scientists can upload their manuscripts or pre-prints without reviewing process. 
We have downloaded all the LaTeX sources of publications up to April 2019 from the official bulk data repositories\footnote{\url{https://arxiv.org/help/bulk_data_s3}}. This way we have obtained 934,287 papers. As we can see in Figure~\ref{fig:arxiv_stats}, the large majority of these papers are from disciplines where mathematical expressions are an important part of publications. The most prominent subject areas are astrophysics, condensed-matter physics, computer science, mathematics, and high energy physics.

\begin{figure}
\begin{tikzpicture}{}
\begin{axis}[
     width=\columnwidth,
     height=3.6cm,
    ybar stacked,
    ymin = 0,
	bar width=5pt,
    enlarge x limits=0.1,
    extra y tick style={tickwidth=3pt, grid=major},
    extra y tick labels={},
    xtick pos=left,
    ytick pos=left,
    ylabel={Frequency},
    xlabel={Subject Area},
    symbolic x coords={astro-ph,cond-mat,cs,math,hep-ph,physics,quant-ph,hep-th,gr-qc,stat,nucl-th,hep-ex,q-bio,math-ph,hep-lat,nlin,nucl-ex,q-fin,eess,econ,chao-dyn},
    xtick=data,
    xlabel style={yshift=-3mm},
    ylabel style={yshift=-5mm},
    x tick label style={rotate=45, anchor=east, font=\scriptsize},
    ]
\addplot[ybar,fill=color1] plot coordinates {(astro-ph,181705) (cond-mat,153934) (cs,139105) (math,127655) (hep-ph,71913) (physics,61273) (quant-ph,43465) (hep-th,33754) (gr-qc,21115) (stat,20078) (nucl-th,16613) (hep-ex,12810) (q-bio,9984) (math-ph,9068) (hep-lat,9056) (nlin,8244) (nucl-ex,6189) (q-fin,4586) (eess,2611) (econ,444) (chao-dyn,240)};
\end{axis}
\end{tikzpicture}
\caption{Number of Papers per Subject Areas in our Sample}
\label{fig:arxiv_stats}
\end{figure}
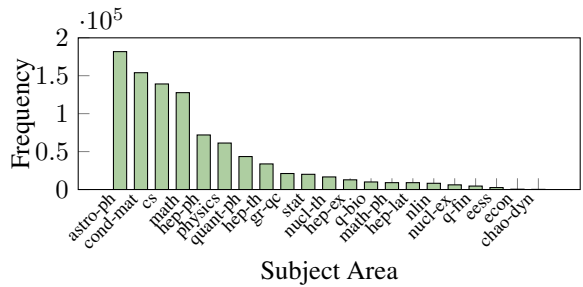

From all publications, we extract mathematical expressions by using regular expressions for the most common math-environments like 'equation', 'align', etc. We do not use inline math snippets but focus on expressions that stand on their own, as they tend to describe more important concepts. Furthermore we extract user-defined commands and macros. Using the library Katex\footnote{\url{http://katex.org}} we compile the raw LaTeX-equations to the XML-based MathML format.
Out of all papers downloaded, 760,041 papers contain at least one equation that we were able to convert to MathML. In total we have a dataset of 28,973,591 MathML equations.
Furthermore we have used regular expressions to find arXiv-ids in the bibliographies of the paper to build a citation graph. In total, 540,892 papers have an outgoing edge, with a total number of edges of 4,553,297. Since we only detect those references that use an arXiv-id, for instance in an url, our citation graph is only a subgraph of the true citation graph.

To ensure reproducibility we provide the scripts used for processing the public arXiv data dump, extracting the mathematical expressions and converting them to MathML as well as collecting meta-data and citations at \url{https://github.com/Whadup/arxiv_library}. We also share our citation graph, which might be interesting in other applications.

\subsection{Data-Representation}
\label{sec:representation}
In order to feed the MathML to a graph convolutional neural network, we have to convert it to a graph with vectorial node features. 
The MathML standard defines around 30 different XML-tags like \texttt{<mi>} for math identifiers or \texttt{<mo>} for math operators. Some of these tags use attributes, for instance to change font or spacing. Leaf nodes contain text like numbers, parenthesis or letters (greek, latin, etc...). 
We view the XML-structure as a tree and use its nodes and edges and derive features based on tags, attributes and text.
For each node we use one-hot encoded feature vectors of dimensionality 256. The first 32 features are used to encode the type of the XML-tag. The next 32 features are used to encode optional attributes, most commonly changes of the font to bold or calligraphy fonts. The remaining 192 dimensions are used to encode the most frequent characters used in leaf-nodes. In Figure~\ref{fig:characters} we see that the most frequent characters are opening and closing parenthesis, followed by a variety of numbers, latin or greek letters and mathematical operators. For both attribute and character features, we introduce special unknown symbols for all rare attributes/characters.
In addition to the one-hot encoded features, we store the position of the node with respect to the parent node.
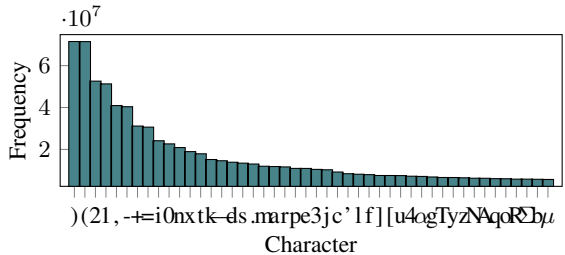
\begin{figure}
\begin{tikzpicture}[tight background, scale=1.00]
      \def\mystrut{\footnotesize\vphantom{())}}
    \pgfplotsset{
    x tick label style={font=\mystrut}
    }
      \begin{axis}[ybar,
        small,
        bar width=1.49mm,
        height=3.6cm,
        width=\columnwidth,
        ylabel={Frequency},
        xlabel={Character},
        xtick pos=left,
        ytick pos=left,
        symbolic x coords={{)},{(},{2},{1},{,},{-},{+},{=},{i},{0},{n},{x},{t},{k},{|},{d},{s},{.},{m},{a},{r},{p},{e},{3},{j},{c},{'},{l},{f},{]},{[},{u},{4},{α},{g},{T},{y},{z},{N},{A},{q},{o},{R},{∑},{b},{μ}},
        enlarge x limits=0.03,
        enlarge y limits=0.05,
        xtick=data,
        ylabel style={yshift=-7mm},
        ]
          \addplot[fill=color2] table [x=char,y=count] {\alphabet};
      \end{axis}
    \end{tikzpicture}
    \caption{The 50 most frequent characters in math environments.}
    \label{fig:characters}
\end{figure}
\section{Learning to Find Related Equations}
In this section we will introduce the graph convolutional neural network used for computing embeddings and present two unsupervised learning tasks used for training the network.
\subsection{Graph-Convolutional Model for Equations}
We define a graph convolutional neural network for the task of embedding mathematical expressions into a low-dimensional vector space.
The raw MathML is converted to graphs with vectorial features as described in Section~\ref{sec:representation}.
We propose to use a special first layer that combines the one-hot encoded information at a node with the decimal position attribute.
Following Vaswani et al. \yrcite{Vaswani/etal/2017a}, we encode the position of the $i$-th node $p_i\in \mathbb N$ using positional embeddings. We use fixed sinusoid embeddings \cite{Vaswani/etal/2017a} denoted by $E(p_j)$, but to still allow the model to control the influence of the positional embeddings, we introduce a learnable scaling coefficient $\alpha$ initialized to 1.
$$x^{(1)}_i = \max\left(\vec 0,\sum_{j \in \mathcal N(i) \cup {i}} W^{(1)}x_j + \alpha E(p_j) + b^{(1)}\right)$$

The first layer is followed by 3 fully-connected graph convolution layers of width 512, where the $l$-th layer is defined by
$$x^{(l)}_i = \max\left(\vec 0,\sum_{j \in \mathcal N(i) \cup {i}} W^{(l)}x^{(l-1)}_j + b^{(l)}\right)$$
which linearly transforms all nodes using a weight matrix $W^{(l)}$, adds a bias term $b^{(l)}$, aggregates by computing the sum over all neighborhoods and applies the ReLU activation component-wise. All graph convolution layers output feature maps with 512 dimensions.
We apply batch-normalization before the first and third graph convolution layer.
For the remainder of this paper, let $\phi(x)\in \mathbb R^{|x| \times 512}$ denote the output of the last graph convolution layer given the input $x$.
To obtain a single embedding for an input graph, we compute the mean of all node features.
This mean is transformed in another linear layer to reduce the dimensionality to 64. For the remainder of this paper, let $\bar \phi(x) \in \mathbb R^{64}$ denote this embedding of $x$.

When scoring similarities between embeddings with margin losses, we need to control the norm of the embeddings, otherwise the notion of adherence to a margin becomes meaningless. 
Among others, Ding et al. \yrcite{Ding/etal/2015a} propose to normalize all embeddings to unit length. We propose a softer normalization inspired by batch normalization \cite{batchnorm} that also allows to obtain embeddings with norms smaller than 1. For every training batch of graphs, we compute the mean of the norm as well as its standard deviation. Then we inversely scale each embedding by the mean plus the standard deviation. This way, most embeddings have norm smaller than 1. We keep a running average of the means and standard deviations. At inference time, we use these running averages for scaling.

\subsection{Representation Learning Tasks}
We propose to train our embeddings using two self-supervised learning tasks simultaneously by adding their respective losses.
\paragraph{Contextual Similarity}
For learning relations between equations, we rely on the established contextual similarity task that was first made popular by word embeddings \cite{Mikolov/etal/2013a} and has hence been used in many representation learning approaches, including our approach \cite{Pfahler/etal/2019a} for learning similarities between equations.
The main idea is that objects that frequently appear in shared contexts are related.
We define the context of mathematical expressions as the paper containing the equation and conjecture that two equations are related if they appear in the same paper, as originally proposed in \cite{Pfahler/etal/2019a}. We extend this approach and further define two equations as related if one paper references the other using a citation graph. This way we hope to connect equations that describe the same context but use different notation.
In addition, we discriminate between sampling expressions from the same paper and from the same section. We hope that within sections, equations are more related to each other.
For obtaining positive examples of related equations, we 
\begin{enumerate}
\item sample a paper uniformly at random and select an expression from this paper uniformly at random.
\item randomly select whether we sample from the same section, same paper or along a citation, \item sample a positive example using that method. When we cannot find a positive example using that method, we jump back to (1).
\end{enumerate}
For learning similarities we also require negative examples. To obtain these, we sample a paper uniformly at random and select an expression from this paper uniformly at random. The random process that generates these weak labels for similarity learning introduces a lot of noise, as many equations we claim are related are unrelated and some of the pairs we say are unrelated are related. We leave the investigation of more advanced sampling schemes to future work.

Using the sampled equations $x$ with positive $x^+$ and negative partners $x^-$, we apply similarity learning. We have to choose a suitable loss function and investigate two  different losses: Triplet and Histogram. 
The triplet loss \cite{Balntas/etal/2016a} we have previously used \cite{Pfahler/etal/2019a}, contrasts the similarity between a positive pair of examples and a negative pair of examples and demands that the similar pair has a higher similarity by a user-defined margin $\Delta$, usually set to 1.

\begin{align}
	\ell_{t}(x,x^+,x^-) = \max (0,\Delta-\langle \bar\phi(x),\bar\phi(x^+)\rangle + \langle \bar\phi(x),\bar\phi(x^-)\rangle) 
\end{align}

In this paper, we propose to use the histogram loss as proposed by Ustinova and Lempitsky \yrcite{Ustinova/Lempitsky/2016a}. It does not work on a triplet of equations, but on a mini-batch of size $m$ positive pairs $X^+$ and a batch of negative pairs $X^-$ with respect to anchor examples $X$.
We collect all similarities between positive pairs in a vector $s^+=(\langle \bar \phi (x_i),\bar\phi (x^+_i)\rangle)_{i=1,...,m}$ and of all negative pairs in $s^-$.
We divide the interval $[-1,1]$ into $R-1$ equally-sized bins with boundaries $-1=t_1,t_2,...,t_R=1$ and width $\Delta=2/(R-1)$ and build histograms for the positive similarities and the negative similarities. Now we demand that the positive histogram leans more toward the $+1$ similarity than the negative histogram. We formalize this intuition as
\begin{equation}
\ell_{h}(s^+, s^-) = \frac 1 {m^2} \sum\limits_{r=1}^R \sum\limits_{r'=1}^r \left(\sum\limits_{i=1}^m \delta_{r}[s^-_i]\right)\left(\sum\limits_{i=1}^m \delta_{r'}[s^+_i]\right) \nonumber
\end{equation}
where instead of hard assignments, we use the triangular kernel
$$\delta_r[s]=\begin{cases}(s-t_{r-1})/\Delta\mbox{ if }s \in [t_{r-1},t_{r}] \\ 
                           (t_{r-1}-s)/\Delta\mbox{ if }s \in [t_{r},t_{r+1}] \\
                          0 \mbox{ otherwise}\end{cases}$$
to put similarities into bins. This way we obtain a differentiable loss function.
We hope that histogram loss is more robust with regard to the massive noise in our labels as each positive example is contrasted with all negative examples.


\paragraph{Masking Task}
We propose to extend the contextual similarity task by another tasks and optimize the sum of both tasks for training our embedding models. The main idea of our second task is, that the symbols in mathematical expressions do not appear independent from each other, but have strong dependencies. Thus if we hide a fraction of the symbols in an equation, we should be able to approximately reconstruct the hidden symbols from the remaining symbols. This task is reminiscent of masked language modeling tasks made popular by BERT \cite{Devlin/etal/2018a} for natural language processing.
In order to successfully solve this task, a model has to learn about the frequencies of symbols and their dependencies from the data, as is illustrated in Figure~\ref{fig:masking}.

\begin{figure}
\begin{equation*}
  \min\limits_{%
  \tikz[baseline]{\node[draw=black,fill=black,line width=0.0pt,rounded corners=.5pt,anchor=base,inner sep=1.5pt] (w1){$\phantom w$};}}
  \frac 1 n \sum\limits_{i=1}^n \ell(\langle w,\tikz[baseline]{\node[draw=black,fill=black,line width=0pt,rounded corners=.5pt,anchor=base,inner sep=1.5pt] (w2){$\phantom w$};}_i\rangle,y_i)
\end{equation*}
\begin{align*}
P( \tikz[baseline]{\node[draw=black,fill=black,line width=0pt,rounded corners=.5pt,anchor=base,inner sep=1.5pt] (p1){$\phantom X$};} = ?) = \begin{pmatrix} w : 0.81 \\ \beta : 0.04 \\ \vdots\end{pmatrix} & & P( \tikz[baseline]{\node[draw=black,fill=black,line width=0pt,rounded corners=.5pt,anchor=base,inner sep=1.5pt] (p2){$\phantom X$};} = ?) = \begin{pmatrix} x : 0.73 \\ y : 0.02 \\ \vdots\end{pmatrix} \\
\end{align*}
\begin{tikzpicture}[overlay]
   \draw[->,line width=.75pt, color=black] (w1.west) to[out=180, in=90] (p1.north);
   \draw[->,line width=.75pt, color=black] (w2.south) to[out=-90, in=90] (p2.north);
\end{tikzpicture}
\caption{Example of the Masking Task with Fictional Values}
\label{fig:masking}
\end{figure}
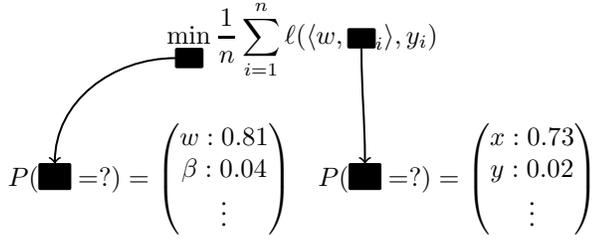

More formally, we proceed as follows: For each input graph $x$ with features $X$, we randomly set the feature vector of $15\%$ of the nodes to all zero obtaining the graph $x_{\blacksquare}$. Then we compute $\phi(x_{\blacksquare})\in \mathbb R^{|x|\times 512}$. Now for each masked node, we want to solve three separate classification tasks: Given $\phi_i(x_{\blacksquare})$, predict the right XML-tag, predict the right XML-attributes (or no-attribute if no attributes where used) and predict the right character (or no-character if no character was used). We solve these tasks using a single linear layer of dimensionality 32,32+1 and 192+1 respectively with soft-max activation and compute the cross-entropy loss $\ell$ on all tasks:
\begin{align*}
\ell_{\text{tag},i} &= \ell(\mathop{\mathrm{softmax}}(W^{(tag)} \phi_i(x_\blacksquare) + b^{(tag)}, X_{i,1:32})\\
\ell_{\text{attr},i} &= \ell(\mathop{\mathrm{softmax}}(W^{(attr)} \phi_i(x_\blacksquare) + b^{(attr)}, X_{i,33:64})\\
\ell_{\text{char},i} &= \ell(\mathop{\mathrm{softmax}}(W^{(char)} \phi_i(x_\blacksquare) + b^{(char)}, X_{i,65:256})\\
\end{align*}
The overall loss of the masking task is defined as the mean of all three classification losses $\ell_{\text{mask},i}= \frac 1 3(\ell_{\text{tag},i}+\ell_{\text{attr},i}+\ell_{\text{char},i})$. The loss is only evaluated for the masked tokens and we compute the mean over all masked tokens to obtain a loss value for $x_\blacksquare$.

Adding this task to the contextual similarity task has the additional advantage that we now learn a representation that not only captures context information, but also preserves information about the raw input symbols.

\subsection{Data-Augmentation}
Data augmentation eases the generalization of machine learning models and is particularly popular for image classification tasks where we can augment images by randomly rotating, scaling, pad\-ding, etc. For mathematical expressions, we propose the following random data augmentation:
Since we know that a renaming of symbols in equations rarely changes the semantic, we propose to randomly permute the character features of all nodes that correspond to a math identifier, encoded in \texttt{<mi>} tags according to the MathML standard.
For each equation we process, we sample a number of flips from a Poisson distribution with expected value $32$. 
Then starting with the identity permutation that does not change the order of our 192 features, we construct a permutation with the desired number of flips by incrementally exchanging two random characters.

\subsection{Hyperparameter Choices}
We train all our models for 20 epochs with Adam optimization, batch size of 128, an initial learning rate of $0.0001$ that we decrease linearly. We use $R=64$ bins for the histogram loss and margin $\delta=1$ for the triplet loss.

\section{Statistical Analysis}
Before we present empirical results on our embedding approach, we want to discuss their statistical significance using concentration inequalities. We begin by discussing the assumptions we use for our discussion that go beyond the usual iid. assumptions. Then we talk about testing of embedding models on hold-out data.
\subsection{Assumptions}
In usual classification settings \cite{Vapnik/2000a}, as well as metric learning settings \cite{Clemencon/etal/2016a,Maurer/2008c}, we assume that we have access to independent and identically distributed training data. In our case of embedding learning for mathematical expressions, we would require to a sample of independent equations, or to be more precise, independent pairs of equations with (weak) similarity labels.
However, this assumption surely does not hold, as two equations that appear in the same paper do not appear independently from each other. An example illustrates this: Let $X_1$ and $X_2$ denote two equations. If it is revealed to you that $X_1$ shows Heisenberg's uncertainty principle $X_1=\left\{\sigma_x\sigma_p \geq\hbar/{2}\right\}$, and that $X_2$ appears in the same paper, the probability of $X_2$ being related to quantum mechanics increases. This illustrates that $P(X_2=x_2 \mid X_1=x_1) \not = P(X_2=x_2)$ and thus $X_1$ and $X_2$ are not independent. 
For the purpose of our analysis, we assume that  
\begin{enumerate}
\item[\textbf{(A1)}] we have an iid. sample of $N$ papers where the $i$-th paper contains $n_i$ equations
\item[\textbf{(A2)}] the number of equations is $n=\sum_{i=1}^N n_i$
\item[\textbf{(A3)}] equations in the $i$-th paper are independent of equations in all other papers.
\end{enumerate}
Surely papers do not appear independently of one another, but this we will ignore in our statistical analysis.
\subsection{Confidence Intervals for Hold-Out Data}
Now we carry out our statistical analysis for measuring scores on hold-out data where the model $\bar\phi$ is fixed and independent of the hold-out data.
In the common iid setting, we can apply concentration inequalities like Hoeffding's inequality to bound the gap between performance evaluations like accuracy or loss measured on a finite hold-out sample and the true expected value.

Since we do not have iid. data, we have to use a refined approach. We use $U$-statistics\cite{Hoeffding/1948a} $s$ to analyze our models. In our case, they are defined as expectations over functions of triples of equations $V(x_1,x_2,x_3)$ called kernel.
$$s = \mathop{\mathbb E}\limits_{x_1,x_2,x_3} V(x_1, x_2, x_3).$$ 
For instance we can define the ranking score, i.e. the fraction of triplets where the positive pair has a higher similarity than the negative pair. For convenience, we define $P(x)$ as the set of possible positive examples for an expression $x$. Then we can write the ranking score as
\begin{equation}
  \begin{split} 
V_{\text{ranking}}(x_1,x_2,x_3) = \Big[\mathbbm 1(x_2\in P(x_1) \wedge x_3 \not \in P(x_1)) \;\cdot \\ \mathbbm 1(\langle \bar\phi(x_1),\bar\phi(x_2)\rangle > \langle \bar\phi(x_1),\bar\phi(x_3)\rangle)
\Big]\end{split}\nonumber
\end{equation}
The histogram loss can also be expressed as a $U$-statistic:
\begin{align*}
V_{\text{hist}}(x_1,x_2,x_3)= \Big[\mathbbm 1(x_2\in P(x_1) \wedge x_3 \not\in P(x_1)) \;\cdot \\\nonumber
\sum\limits_{r=1}^R\sum\limits_{r'=1 }^r \delta_{r'}(\langle\bar\phi(x_1),\bar\phi(x_2)\rangle) \delta_{r}(\langle \bar\phi(x_1),\bar\phi(x_3)\rangle)
\Big]\end{align*}
Note that whenever $(x_1,x_2,x_3)$ is not a triple with a positive and a negative example, the triple contributes zero to the expectations.
To simplify the analysis, we first consider so-called complete $U$-statistics, i.e. we estimate the true score using all possible triplets from a finite set of $n$ expressions $\{x_i \mid {i=1,...,n}\}$
\begin{equation}
\widehat{s}= {\binom n 3}^{-1} \sum\limits_{i,j,k} V(x_i, x_j, x_k). 
\label{eq:empirical_s}
\end{equation}

Now the goal is to bound the difference between $s$ and $\hat s$ in high probability.

\begin{theorem}
If $V(x_1,x_2,x_3) \in [0,1]$, then for $\delta \in (0,1)$ with probability at least $1-\delta$ we have
$$s \leq \hat s + 3\sqrt{\frac{\ln(1/\delta)} {2N}}$$
\end{theorem}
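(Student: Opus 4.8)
The plan is to exploit the special structure of $U$-statistics together with the independence across papers granted by assumptions \textbf{(A1)}–\textbf{(A3)}. The key observation is that although the $\binom{n}{3}$ triplets in $\hat s$ are highly dependent—many of them share equations, and equations from the same paper are dependent—the randomness ultimately flows only from the $N$ independently sampled papers. So the right move is to rewrite $\hat s$ as a function of the $N$ i.i.d. papers and apply a bounded-differences argument at the level of papers rather than at the level of equations or triplets.

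First I would recall Hoeffding's representation of a $U$-statistic as an average of averages: a complete $U$-statistic of degree three can be written as a convex combination of simpler ``decoupled'' statistics in which no two arguments come from the same block. Concretely, one conditions on the per-paper equation counts $n_i$ and expresses $\hat s$ through sums of $V$ evaluated on equations drawn from distinct papers. Because of the bound $V \in [0,1]$ and the fact that the kernel vanishes unless the triple actually contains a positive/negative pair, each such decoupled term is itself a bounded average over independent paper-level contributions.

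Next I would set up McDiarmid's (bounded differences) inequality on the papers. Viewing $\hat s$ as a function $g(P_1,\dots,P_N)$ of the $N$ independent papers, I would bound the effect of resampling a single paper $P_i$ while holding the others fixed. Here the crucial accounting is that paper $i$ participates in a bounded fraction of the triplets, and since $V$ takes values in $[0,1]$, changing $P_i$ can move $\hat s$ by at most a quantity of order $1/N$; together with the degree-three structure this yields a bounded-differences constant $c_i$ on the scale $3/N$. Plugging the $c_i$ into McDiarmid gives a one-sided deviation bound of the form $\Pr[s - \hat s > t] \le \exp(-2N t^2 / 9)$ after noting that $\mathbb{E}[\hat s] = s$ by the defining expectation of the $U$-statistic. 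Inverting this tail bound at level $\delta$ produces exactly the claimed $3\sqrt{\ln(1/\delta)/(2N)}$.

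The main obstacle I expect is the per-paper bounded-differences constant: one must argue carefully that the normalization $\binom{n}{3}^{-1}$ combined with the number of triplets touching any fixed paper collapses to an $O(1/N)$ sensitivity, uniformly in the (random) equation counts $n_i$. This is where the decoupling into the $N$ independent blocks does the real work—it converts what looks like a $1/n$-scale concentration over equations into a $1/N$-scale concentration over papers, which is the weaker (and correct) guarantee given that equations within a paper are not independent. I would double-check that the argument does not secretly assume independence of equations, since that is precisely the assumption we have abandoned, and confirm that only \textbf{(A3)} (independence \emph{across} papers) is invoked.
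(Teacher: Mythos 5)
Your route is genuinely different from the paper's. The paper does not decompose $\hat s$ by papers at all: it treats the $\binom n 3$ triplet summands directly as a sum of partly dependent bounded variables, builds a dependency graph on triplets (two triplets adjacent iff they share a paper), bounds its chromatic number by $\chi^* \leq 3\binom{n-1}{2}\max_i n_i$ via the maximum degree, and invokes Janson's concentration inequality to get $\mathbb P(s-\hat s\geq t)\leq \exp(-2t^2\binom n 3/\chi^*)\leq \exp(-2t^2 n/(9\max_i n_i))$. You instead push the randomness down to the $N$ independent paper blocks and run a bounded-differences (McDiarmid) argument there. Both are legitimate ways to handle the dependence; your version avoids introducing and coloring an auxiliary graph, and it even yields a slightly sharper intermediate constant, since your $\sum_i c_i^2 = 9\sum_i n_i^2/n^2$ is bounded by the paper's $9\max_i n_i/n$ (because $\sum_i n_i^2\leq n\max_i n_i$). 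The Hoeffding decoupling you invoke at the start is not really needed for this: once you fix the equation counts, the per-paper sensitivity can be bounded by direct counting.

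The one step you should not wave through is the claim that the per-paper sensitivity is ``on the scale $3/N$.'' Holding the counts fixed and resampling paper $i$ touches at most $n_i\binom{n-1}{2}$ of the triplets, each contributing at most $\binom n 3^{-1}$, so the correct constant is $c_i = 3n_i/n$; this equals $3/N$ only when the papers have balanced equation counts. McDiarmid then gives $\exp(-2t^2n^2/(9\sum_i n_i^2))$, and by Cauchy--Schwarz $\sum_i n_i^2 \geq n^2/N$, so the exponent is in general \emph{weaker} than the claimed $2t^2N/9$, with equality only for equal $n_i$. To be fair, the paper's own final inequality, which needs $n/\max_i n_i \geq N$, has exactly the same character (it too holds only for balanced counts, since in fact $n\leq N\max_i n_i$), so you lose nothing relative to the published argument; but to obtain the theorem as stated without a balancedness assumption you would need to either state the bound in terms of the effective sample size $n^2/\sum_i n_i^2$ or add that assumption explicitly. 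Finally, do state that you condition on the $n_i$ before applying McDiarmid: otherwise resampling a paper changes $n$ and hence the normalization $\binom n 3^{-1}$, and the bounded-differences constants are no longer deterministic. The identification $\mathbb E[\hat s]=s$ also deserves a sentence (triples drawn from a single paper have a different law than cross-paper triples), though the paper glosses over this point as well.
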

\begin{proof}
We proof this using Janson's concentration inequality for sums of partly dependent variables \cite{Janson/2003}. In Equation (\ref{eq:empirical_s}), we compute a sum over triplets, where the triplets are constructed from dependent samples. Thus the summands have dependencies. We construct a dependency graph, where each node corresponds to a triplet and two nodes have an edge if they are dependent. In our case two triplets are connected if one of the equations in the first triplet is from the same paper as any equation in the second triplet. 
In order to apply \cite{Janson/2003}, Theorem 2.1, we have to bound the chromatic number $\chi^*$ of this dependency graph. To this end, we consider an arbitrary node $v$ in the graph with equations from paper $i$, $j$ and $k$. Its degree is bounded by the number of equations in the papers $\mbox{deg}(v) \leq (n_i+ n_j + n_k) \binom {n-1}2$, hence
\begin{equation}
\chi^* \leq 3\binom {n-1}2 \max_i n_i.
\label{eq:chi}
\end{equation}
Then for $t>0$ we have 
\begin{align}
\mathbb P(s - \widehat s \geq t)
&\overset{\text{\cite{Janson/2003}}}\leq \exp\left(\frac{ -2 t^2 {\binom n 3}}{\chi^*}\right) \\
&\overset{\text{(\ref{eq:chi})}}\leq \exp\left(\frac{ -2 t^2 n}{9 \max_i n_i}\right)
\leq \exp\left(\frac{ -2 t^2 N}{9}\right)\nonumber
\end{align}
Solving for $t$ yields the desired confidence interval.
\end{proof}
For incomplete $U$-statistics, where we estimate the score by a subset of triplets $D$ sampled independently with replacement\cite{Clemencon/etal/2016a} as
$$\widehat s_\subset = \frac 1 {|B|} \cdot \sum\limits_{(x_i,x_j,x_k) \in D} V(x_i, x_j, x_k),$$
the bound (\ref{eq:chi}) no longer holds. But we can compute an empirical bound $\hat\chi$ using any greedy graph coloring algorithm. Then Janson's concentration inequality implies that with probability at least $1-\delta$  
\begin{equation}
  s \leq \widehat s_\subset + \sqrt{\frac{\ln(1/\delta) \hat\chi} {2|D|}}.
\end{equation}



\section{Experimental Results}
In this section we perform an experimental evaluation of our embedding model. In particular, we focus on the use-case of a search engine for mathematical expressions. 
We begin by investigating the effects of the individual components of our model on a small, closed subset of the data.
Then we investigate the effectiveness of our method on all 29,9 million equations. 

\subsection{Analysis on the Machine-Learning-Subset}

We begin our analysis only on arXiv publications where the primary subject classification is machine learning (\texttt{cs.LG}).
This is a natural choice, as we have some expertise to judge the quality of our results, a task which we are in no way equipped for across all subject fields.

Of these 9,936 publications, we sample two subsets, train and test of size 7,949 and 1,987 respectively with a total number of equations of 237,335 and 54,767 respectively. In Section~5 we have seen that scores evaluated on hold-out data converge to true empirical scores in $\mathcal O(1/{\sqrt{N}})$ where $N$ is the number of papers in the test set. In this regard, our test set is appropriately sized.
We use the train-set for building our embedding models and use the test-set to investigate generalization properties.

For training, we sample 1 million triplets $(x,x^+,x^-)$. Of these triples, $45.9\%$ have a positive pair from the same section, $42.2\%$ from the same paper and $13.9\%$ along an edge in the citation graph.
We sample 100k triplets for testing with similarly distributed positive examples.

We perform an ablation study on our proposed embedding model and compare it to prior work.
This section investigates the influence of our design choices. We decided (a) to use the Histogram loss instead of the triplet loss, (b) to also add an masking task, (c) to data augmentation. 

We measure Ranking score, i.e. the fraction of all triples in the training data where same-class pairs of equations have higher similarities than across-class pairs.
\begin{table}[]
\caption{Ablation Study}
\label{table:ableation}
\setlength\tabcolsep{2pt}
\small
\begin{tabular}{@{}lccc@{}}
\toprule
Influence Factor  & \begin{tabular}[c]{@{}c@{}}Ranking\\Hold-Out\end{tabular} & \begin{tabular}[c]{@{}c@{}}Ranking\\Eval\end{tabular} & \begin{tabular}[c]{@{}c@{}}Accuracy\\Eval\end{tabular}  \\ \midrule
Full Model        & 76.5 ($\pm 0.0$)   & 57.7 ($\pm 0.0$) &  60.6 ($\pm 0.0$) \\
No Histogram Loss & 72.5 ($ -4.0$)     & 49.6 ($  - 8.1$) &  30.9 ($  -29.7$) \\
No Masking        & 75.2 ($ -1.3$)     & 54.3 ($  - 3.4$) &  50.0 ($  -10.6$) \\
No Augmentation   & 75.3 ($ -1.2$)     & 53.6 ($  - 4.1$) &  50.0 ($  -10.6$) \\\midrule
Bitmap CNN original\cite{Pfahler/etal/2019a} 
                  & 76.2 ($ -0.3$)     & 71.9 ($ + 14.2$) &  68.3 ($ +  7.7$)        \\ 
Bitmap CNN retrained 
                  & 70.0 ($ -6.5$)     & 50.0 ($  - 7.7$) &  52.9 ($  -7.7$)        \\
\bottomrule
\end{tabular}
\end{table}
As we see in Table~\ref{table:ableation}, our evaluations indicate that all of our design choices contribute favorably to the overall performance on hold-out data, as deactivating any component decreases the score. We note that the biggest gain is achieved by switching form triplet-loss to histogram-loss. We believe that this is due to the massive noise in our labels.

We also compare with the our previous model \cite{Pfahler/etal/2019a} and see that we beat this baseline by a small margin. However this comparison is not entirely fair, as their model was trained on a larger dataset of around 25k papers, probably including some of the papers in our test set. We use their code to re-train on our subset of equations and yield a substantial margin of 6.5 percentage points.

We also use our previous evaluation data \cite{Pfahler/etal/2019a}. It consists of 103 equations labeled into 13 categories related to machine learning including k-means, LSTMs, empirical risk minimization, etc. 
Since only bitmaps are available, we transcribe the equations manually. There are three issues with this evaluation set: First, it is too small to produce significant numbers. 
Second, some equations in the dataset appear in the training data. This is not only the case for our subset, but also for the training data used in \cite{Pfahler/etal/2019a}.
Third, many equations within a category are obviously from the same paper, hence we have seen some of the pairs in our training data. 
Nevertheless we use the evaluation data. Indeed in our use-case of search engines, the crawled equations will always be in the training data and only the user queries will be unseen equations. In a way, we simulate this with the eval data.

Following the original experimental protocol, we measure the 1-nearest-neighbor accuracy obtained in leave-one-out validation (named Accuracy) as well as the above Ranking score. In Table~\ref{table:ableation}, we again see that our model is only surpassed by the pre-trained model that uses a larger training dataset. 
This motivates the use of a much larger dataset.


\subsection{Large-Scale Experiments}
For training on all the papers in our dataset, we sample two different sets of training triplets, one with 5 million triplets and one with 20 million triplets.
We train our models on a Nvidia GTX1080 GPU with 8GB memory, which allows us to process mini-batches of 128 triplets or 384 equations. During training, we process around 1,300 triplets per second, not counting the time for reading data from hard disk. In total, one of the 20 epochs of training on 20mio triplets takes 6:30h on our system. We use annoy to construct an index for approximate nearest neighbor retrieval. In total, our index uses 13GB of hard disk storage to manage all mathematical expressions in our dataset.


\begin{table}[]
\caption{Eval Scores}
\label{table:eval_large}
\centering
\begin{tabular}{@{}lccc@{}}
\toprule
Dataset  & \begin{tabular}[c]{@{}c@{}}Ranking\\Eval\end{tabular} & \begin{tabular}[c]{@{}c@{}}Accuracy\\Eval\end{tabular}  \\ \midrule
1mio ML-Subset Triplets                      & 57.7  &  60.6 \\
5mio Full ArXiv Triplets                     & \textbf{76.2}  &  80.9 \\
20mio Full ArXiv Triplets                    & 75.3 & \textbf{84.0}\\
Bitmap CNN original\cite{Pfahler/etal/2019a}  & 71.9  &  68.3 \\ 
\bottomrule
\end{tabular}
\end{table}
Before we evaluate our models in a search engine study, we again check the performance on the aforementioned evaluation data.
The results in Table~\ref{table:eval_large} indicate the power of using large amounts of training data, although it is unclear if using 20mio training triplets is an advantage over using only 5mio. Our large-scale models beat all the models trained on smaller amounts of data. Even though the smaller models were trained on only machine-learning related data, we obtain better scores on the machine learning evaluation data by training on all disciplines.
\begin{figure}[b]
\centering
\begin{tabular}{lc}
\toprule
Query:& $P(d \mid s ) = \frac{P(d,s)}{P(s)}$\\
1st Result:&$P(s \mid d) = \frac{P(d\mid s) P(s)}{P(d)}$\\
4th Result:&$P(d) = \int P(d\mid s) P(s) ds$\\
\bottomrule
\end{tabular}
\caption{Example: Bayes' law. We report the first result and the first result that does not show Bayes' law, but, in this case, the related law of total probability. The first result is from: R. H. Leike, T. A. Enßlin, \textit{Charting nearby dust clouds using Gaia data only}, 2019.}
\label{fig:example1}
\end{figure}


\begin{figure}[t]
\centering
\small
\begin{tabular}{lc}
\toprule
Query: &$\sum_{i<j} w_{ij} \, s_i \, s_j + \sum_i \theta_i \, s_i$\\
'Boltzmann' Result: & $ E = -\sum_i b_i \, s_i  - \sum_{i<j} w_{ij}  s_i  s_j$. \\
'Ising' Result: & $\mathcal{H}=-\sum_{i<j}C_{ij}\,J_{ij}\sigma_{i}\sigma_{j}-\sum_{i}h_i\sigma_{i}$ \\
\bottomrule
\end{tabular}
\caption{Example: Ising Model. We find equations related to both Ising Models and Boltzmann Machines. First result is from: Weinstein, \textit{Learning the Einstein-Podolsky-Rosen correlations on a Restricted Boltzmann
Machine}, 2017. Second result is from: Ferrari et al., \textit{Finite size corrections to disordered systems on Erdös-Rényi random graphs}, 2013.}
\label{fig:example2}
\end{figure}
Let us now inspect two example search queries. In Figures~\ref{fig:example1} and \ref{fig:example2} we see two examples from the introduction, Bayes law and Ising models, and their respective nearest neighbors under our model trained on 5mio triplets. We see that we can find other definitions of Bayes' law as well as the related law of total probability. When we perform a query for the Ising Model and look at the first 20 results, we find papers where the model is called Boltzmann machine as well as papers that refer to the Ising model. This illustrates the power of querying for mathematical expressions instead of using keywords.

\subsection{Search Engine Study}
Finally we want to study the usefulness of our embedding approach for a search engine application more systematically. 
Traditionally, validating search engines using measures like precision or recall, requires relevance scores for each result for each evaluation query. We see that this requires a lot of manual annotation work since we have to manually identify each relevant equation for each query.
Unfortunately, we were not able to find available evaluation data. The best fit is the NTCIR-12 task evaluation data \cite{Zanibbi/etal/2016a} consisting of 37 annotated queries. However is not appropriate for our approach, as most queries are a combination of math as well as keywords. When we ignore the keywords, the remaining query become very generic, for instance $x+y$, which makes it very unlikely that we accurately find the articles labeled as relevant. In addition, the overall focus of the NTCIR-12 task is recovery of exact matches, whereas our focus is on retrieving \emph{related} expressions.

Consequently, we curate and publish our own evaluation data. To reduce the manual annotation labour, we want to apply a heuristic for the relevance judgement. To this end, we have asked our colleagues, many from disciplines other than computer science and data science, to provide us with equations that we should query. For each equation, they provide a set of keywords or key-phrases that should appear in the section around the result. If one of the keywords is present, we count the result as correct. This way we can evaluate our search result without manually checking result lists. If a keyword has more than 10 characters, we also count it, if we find a substring that has a Levenshtein distance less than 2.
In total, we have 53 evaluation queries publicly available and editable online\footnote{Crowd-sourced evaluation data can be accessed and edited here: \url{https://www.overleaf.com/8721648589nrjxgwmtzfvm}.}.

We inspect two different information retrieval metrics that do not require to know the number of relevant documents in advance: Precision$@k$ and unnormalized Mean Average Precision.
Precision$@k$ is defined as the fraction of relevant documents within the first $k$ results. We report it for lists of 10, 100 and 1000 results and compute its mean over our evaluation queries.

Unnormalized Mean Average Precision is derived of the standard mean average precision metric. Since we do not now the number of relevant documents in advance, we omit this term, limit the search to a maximum of 1000 results and obtain the following definition
$$\textrm{uMAP} = \sum\limits_{k=1}^{1000} P(k) \Delta_k$$
where $P(k)$ is Precision$@k$ and $\Delta_k$ specifies if the $k$-th result is relevant. Again we compute the mean over all evaluation queries. In comparison to Precision$@k$, uMAP considers the order of the search results and rewards relevant results early in the result lists.

For reference, we include retrieval based on a bag-of-words representation. To this end, we use our data representation as in Section~\ref{sec:representation}, but compute the sum over all nodes in the graph to obtain a single 256-dimensional vector of the whole tree. We retrieve the nearest neighbors using cosine similarity.

In Table~\ref{table:precisions}, we see that our approach beats the bag-of-words margin, in particular for larger values of $k$.
We see for Precision$@10$, the performance between BoW and our embedding model is very similar. This is because for many queries the top-10 results are mostly near-perfect matches which are easily identified even without machine learning. However when we look at more results, we are able to find almost 50\% more relevant equations.

\begin{table}[]
\caption{Search Engine Performance}
\label{table:precisions}
\centering
\begin{tabular}{@{}lrrrr@{}}
\toprule
      & P$@$10 & P$@$100 & P$@$1000 & uMAP \\ \midrule
BoW   & 0.4567 & 0.3170 & 0.2083 & 106.17  \\
5Mio  & \textbf{0.5038} & \textbf{0.3817} & \textbf{0.2984} & \textbf{165.04} \\
20Mio & 0.4547 & 0.3709 & 0.2897 & 156.51       \\ \hline
\end{tabular}
\end{table}

Overall the precision values seem very low. This is in part due to the experiment design where we rely on the annotated keywords. A closer inspection reveals that the queries achieve very different precision values.
In Table.~\ref{table:sorted_p100} we show the best and worst-performing queries of our 5Mio training examples model. There are examples that achieve more than 90$\%$ precision, but many queries have precision lower than $1\%$. We note that the results are better if keywords are broader. Highly specific queries where the number of relevant documents is low perform poorly. We hope that in the future our collection of evaluation query grows further to allow more informative evaluations. In particular with more queries it may be helpful to split the evaluation data into difficulty levels according to the number of relevant documents.

\begin{table}[]
\caption{Sorted P$@100$ values per Query for 5Mio Triplets}
\label{table:sorted_p100}
\small
\centering
\begin{tabular}{@{}rl@{}}
\toprule
P$@100$ & Keywords                                                                                               \\ \midrule
0.95 & 'policy gradient' \\
0.93 & 'convex', 'strongly convex'\\
0.91 & 'q-learning', 'reinforcement learning'\\
0.89 & 'chain complex', 'sequence'\\
0.81 & 'lipschitz', 'continuous'\\
0.80 & 'empirical risk', 'training objective', 'erm'\\
0.71 & 'lstm', 'recurrent neural network'\\
0.70 & 'neural network', 'hidden layer', ...\\
0.67 & 'eigenvalue', 'eigenvector',...\\
0.66 & 'received signal strength indication', 'rssi', ...\\
 & $\vdots$ \\
0.08 & 'jaccard similarity', 'min-hashing', ...\\
0.07 & 'fredholm integral', 'equation of the first kind', ...\\
0.07 & 'effective collection area', 'effective area'\\
0.07 & 'modified erlang', 'erlang'\\
0.07 & 'johnson-lindenstrauss', 'embedding'\\
0.06 & 'significance of detection', 'lima'\\
0.06 & 'log-odds alignment', 'pairwise dynamic prog.', ...\\
0.05 & 'proximal gradient'\\
0.04 & 'handshaking lemma', 'handshake lemma'\\
0.03 & 'single photon spectrum', 'multi-gaussian distr.', ...\\
0.02 & 'graphcnn', 'geometric deep learning', ...\\
0.01 & 'crosstalk probability'\\ \bottomrule
\end{tabular}
\end{table}

\subsection{Automatic Identification of Equalities}

\begin{table*}[t]
	\centering
    \setlength{\tabcolsep}{0.25em}
	\caption{Results of the Mathematical Retrieval Experiment. We report recall@K for $K\in\{1,10,100\}$.}
	\label{table:retrieval2}
	\begin{tabular}{l c  c  c  c  c  c  c  c  c }
		\toprule
		{\textbf{Model}} & \multicolumn{3}{c}{\textbf{Equalities (36864)}} & \multicolumn{3}{c}{\textbf{Relations (40960)}} & \multicolumn{3}{c}{\textbf{Inequalities (13312)}}\\
         & {R@1} & {R@10} & {R@100} & {R@1} & {R@10} & {R@100} & {R@1} & {R@10} & {R@100}
		\\ \midrule
		Transformer Model & \bf 0.511 & 0.71 & 0.805 & 0.503 & 0.697 & 0.791 & 0.484 & 0.765 & 0.86
		\\ 

		BoW & 0.483	& 0.653 & 0.739 & 0.491 & 0.658	& 0.743 & 0.503	& 0.738	& 0.821 \\ 
		FastText \cite{Joulin/etal/2017a} & 0.480 & 0.650 & 0.739 & 0.488 & 0.651	& 0.742 & 0.488	& 0.713	& 0.810 \\ 
		Graph-CNN  & 0.507 & \bf 0.833 & \bf0.884 & \bf0.512 & \bf 0.834 & \bf 0.883 & \bf 0.504 & \bf 0.870 & \bf 0.922\\
		\bottomrule
	\end{tabular}

\end{table*}

We have extracted equalities and inequalities in the test-set of our data using regular expressions. Using a simple heuristic, we filter the resulting (in-)equalities, such that left-hand-side (LHS) and right-hand-side (RHS) do not differ in length dramatically, thereby eliminating formulas like definitions, where the LHS is a only single symbol. We derive three different data sets, one with only equalities (LHS and RHS split at "$=$"), one with inequalities (split at $<$ and $\leq$) and one with mixed relations (split at $=<>\leq$ and $\geq$).
This data allows us to use the LHS of the (in-)equalities as query in hopes of retrieving RHS. We make our finetuning-data available at \url{https://whadup.github.io/arxiv_learning/} as well.

Following other machine-learning-based approaches for mathematical retrieval \cite{Mansouri/etal/2019a,Pfahler/etal/2019a,Pfahler/Morik/2020a}, we use our models to encode formulas into a dense vector space and retrieve results using approximate nearest neighbor search \cite{Bernhardsson/2018a} in this vector space. 
We finetune our models on half of the available data and test on the remaining half.
\paragraph{Finetuning Task} We propose to use contrastive learning to learn to identify the RHS given the LHS.
The learning task in contrastive learning is identifying the right partner for each input in a minibatch of datapoints. Hence the representation learning problem is formulated as a classification problem.
Let $X^l, X^r  \in \mathbb R^{m \times d}$ contain the output embeddings of a minibatch of LHSs and RHSs.
We normalize each embedding to unit-length and denote the normalized embeddings by $\bar X^l$ and $\bar X^r$.
We use the InfoNCE loss\cite{Oord/etal/2018a}, i.e. the negative log likelihood of softmax probabilities parameterized by the pairwise cosine similarities between the LHs and RHSs:

\begin{equation}
    \ell_\tau(\bar X^l, \bar X^r) = m^{-1} \sum_{i=1}^m \log \frac
    {\exp(\langle \bar X^l_i, \bar X^r_i\rangle / \tau)}
    {\sum_{j\not =i} \exp(\langle \bar X^l_i, \bar X^r_j\rangle / \tau)}
\end{equation}
where $\tau > 0$ is a hyperparameter that controls the temperature of the ouput probability distribution, which we set to $10^{-2}$.
The contrastive learning task is more difficult for larger batchsizes $m$, as there are more candidate RHSs to choose from and thus the underlying classification problem becomes more difficult. But it has been shown that the utility of the model increases for larger batchsizes \cite{Chen/etal/2020a,Misra/Maaten/2020a}, which we also investigate in our application.

\paragraph{Baseline-Models}
In addition to our models we include several baseline models:
\begin{itemize}
\item 
First, we test a simple \emph{bag-of-words} (BoW) model that is trained on a bag of MathML tree nodes. This model does not use a pre-training phase, but is only tuned on the finetuning data. 
The BoW model maps the sparse BoW representation to a $d$-dimensional vectorial embedding though a single matrix multiplication. 
In comparison to our BERT models, we do not restrict the vocabulary size of the inputs. 
The representation is trained using the same contrastive learning task with InfoNCE loss. We test $d\in\{64,128,256\}$ and report the best result after varying learning rates and number of training epochs in a grid search.
\item
Second we evaluate a pretraining approach based on the BoW-model. The word-embedding based approach 
\emph{fastText} by Joulin et al.\cite{Joulin/etal/2017a} is trained by predicting which tokenss appear in the contexts together. Mansoury et al. use it for learning embeddings of formulae by serializing a MathML layout tree similar to the one we are using in this work \cite{Mansouri/etal/2019a}, hence we include it in our comparison. We finetune these embeddings by learning a linear mapping into a $d$-dimensional vector space, $d\in\{64,128,256\}$ using the same contrastive learning task.
\item Third, we also train a BERT \cite{Devlin/etal/2018a} transformer model on our mathematical expressions using the same two pretraining tasks: masked token prediction and same-paper prediction. This model has 6 million trainable parameters and uses 8 transformer layers with a hidden dimensionality of 256 and an intermediate dimensionality of 768. We use the same vocabulary as for our graph neural network.
\end{itemize}
We begin by training our models and the baseline-models with a minibatch-size of 1024. Then we also investigate the effect of varying the batch-size.
\paragraph{Results}
For testing, we compute embeddings for all LHSs and RHSs in the test-data and store them in an index structure.
We use annoy \cite{Bernhardsson/2018a}, an indexing method for approximate nearest-neighbor search based on an ensemble of random projection trees. We use an ensemble of 16 trees with default hyperparameters, but we found that the results were very insensitive to our particular parameter choices. 

Then we query the $k$-nearest neighbors, $k\in \{1,10,100\}$ for each formula from the test-set and check if the corresponding other side of the (in-)equality is in the result set. This way we can compute recall values to measure the quality of our embeddings.

As we can see in Table~\ref{table:retrieval2}, our Graph-CNN network substantially outperforms both baseline methods -- bag-of-words and FastText -- but also the transformer model.
With one exception, we can report the best recall scores on all 3 subsets of formulas, often by quite substantial margins.

\section{Conclusion and Outlook}
Finding relevant literature even across disciplines is essential for scientific investigations. The search results should entail stimulating, relevant papers. 
Very often, a look at the formula in a paper gives a compact description of the problems and solutions discussed in the paper. Hence, the goal is to offer related papers based on the mathematical expressions. This task is different from mathematical information retrieval, but it shares the problem of determining the right representation of mathematical expressions. 

In this paper, we have proposed and evaluated a new method for searching mathematical expressions based on machine learning. The problem is framed as representation learning on graph-structured data. A precise definition of this task with its assumptions is given. Using concentration inequalities for sums of partly dependent variables allowed us to analyze the performance of our embedding models with statistical assumptions that go beyond the usual i.i.d. over-simplifications. Further work could extend this into a PAC style analysis.

For the first time, we have applied unsupervised embedding learning with graph convolutional neural networks to learn a representation of math that allows efficient retrieval of semantically related expressions. Unlike existing work, our approach does not rely on hand-written rules, but learns embeddings purely data-driven in a combination of two unsupervised tasks: On the one hand, we train a contextual similarity tasks where labels are generated from the surrounding contexts of mathematical expressions, on the other hand we train a self-supervised masking task where labels are derived directly from the inputs.
To illustrate our ideas, we have curated a huge dataset with over 29 million mathematical expressions based on over 900,000 papers hosted on arXiv.org. This allowed us to train graph convolutional neural networks with millions of equations and to carefully examine the impact of our design choices.
We have shown the benefit of our search system using a new dataset of annotated math queries.

As of now, our method uses uniform sampling for positive and negative examples. In the future we want to explore more guided variants of sampling. It is still unclear what the tradeoffs between less noise in the labels and more bias in the sampling are.



%
Given the importance of the data samples and their labeling, we have set up a crowd-sourced evaluation procedure.
Currently, there are more evaluation queries available than in the NTCIR dataset. This data collection will be continued and extended to user interactions with a running search engine. 
\subsection*{Acknowledgements}
We thank our fellow scientist who contributed annotated equations to our evaluation dataset. 
This work has been supported by Deutsche Forschungsgemeinschaft (DFG) within the Collaborative Research Center SFB 876, "Providing Information by Resource-Constrained Data Analysis", project A1. \url{http://sfb876.tu-dortmund.de}. Parts of this work have been funded by the Federal Ministry of Education and Research of Germany as part of the competence center for machine learning ML2R (01IS18038A).
\ifPDFTeX
\bibliographystyle{icml2021}
\else
\renewcommand{\refname}{References \hfill\normalfont{\color{Plum}[fix the bibstyle before submission]}}
\bibliographystyle{apalike}
\fi
\bibliography{main.bib}

\end{document}